\documentclass[11pt]{article}
\usepackage[a4paper,margin=1in]{geometry}
\usepackage{amsmath,amssymb,amsthm,bm}
\usepackage{graphicx}
\usepackage{hyperref}
\usepackage{amsmath,amsfonts}
\usepackage{algorithmic}
\usepackage{array}
\usepackage[caption=false,font=normalsize,labelfont=sf,textfont=sf]{subfig}
\usepackage{textcomp}
\usepackage{stfloats}
\usepackage{url}
\usepackage{indentfirst}

\usepackage{verbatim}
\usepackage{graphicx}
\def\BibTeX{{\rm B\kern-.05em{\sc i\kern-.025em b}\kern-.08em
		T\kern-.1667em\lower.7ex\hbox{E}\kern-.125emX}}
\usepackage{balance}
\usepackage{amssymb}
\usepackage{algorithm}
\usepackage{cite}
\usepackage{bm}
\usepackage{cuted}
\usepackage{placeins}

\usepackage[utf8]{inputenc}
\usepackage{lineno} 
\usepackage{cases}
\usepackage{float}  
\usepackage{textcomp}
\usepackage{bbold}
\usepackage{xcolor}
\usepackage{siunitx}

\usepackage{caption}
\usepackage{amsthm}
\newtheorem{theorem}{Theorem}

\title{Optimality Analysis of RSMA Degenerating to SDMA Under Imperfect SIC}
\author{
	Xuejun Cheng$^{1}$\protect\footnotemark[2],
	Qian Zhang$^{1}$\protect\footnotemark[2],
	Yunnuo Xu$^{1}$,
	Zheng Dong$^{1}$,
	Ju Liu$^{1}$,
	Bruno Clerckx$^{2}$\\[2pt]
	\small $^{1}$School of Information Science and Engineering, Shandong University, Qingdao 266237, China\\
	\small $^{2}$Department of Electrical and Electronic Engineering, Imperial College London, London SW7 2AZ, U.K.\\
}
\date{}

\begin{document}
	\maketitle
	\begingroup
	\renewcommand{\thefootnote}{\fnsymbol{footnote}}
	\footnotetext[2]{Equal contribution.}
	\endgroup
	\setcounter{footnote}{0}

	\begin{abstract}
		This document serves as supplementary material for our journal submission, providing detailed mathematical proofs and derivations that support the results presented in the main manuscript.
		Specifically, under a modeling framework that jointly considers transceiver hardware impairments and imperfect successive interference cancellation (SIC), we systematically derive and prove from an optimality perspective that: when the residual interference coefficient approaches $1$ (i.e., SIC becomes severely ineffective), there exists an optimal solution such that the common stream beamformer satisfies $\bm w_c^\star=\bm 0$, and hence the optimal rate-splitting multiple access (RSMA) transmission structure degenerates into space division multiple access (SDMA). This conclusion provides a verifiable theoretical justification for the convergence phenomenon observed in simulations, namely that ``the RSMA performance gradually approaches that of SDMA as SIC degrades'', and can also serve as a reference for multiple-access selection and system design in SIC-limited scenarios.
	\end{abstract}
	
	\section{Introduction}
	
	Rate-splitting multiple access (RSMA) provides flexible interference management for MIMO systems by splitting each user’s message into common and private parts and transmitting them jointly. The basic idea is that the common stream can be decoded by multiple users and removed at the receiver, while the private stream is decoded only by its intended user. In this way, RSMA establishes a unified framework between space division multiple access (SDMA) and non-orthogonal multiple access (NOMA), and can adaptively adjust the transmission strategy according to channel conditions and interference structures.
	
	However, the system performance of RSMA is closely related to the effectiveness of successive interference cancellation (SIC) at the receiver. In practical systems, due to transceiver hardware non-idealities, channel estimation errors, and finite-precision digital processing, SIC is often far from ideal and inevitably results in residual interference. This residual component continuously enters the private stream decoding process as additional interference, thereby weakening the interference management gain brought by common/private splitting and degrading system performance. Although existing studies have observed at the level of intuitive analysis and numerical simulations that the RSMA performance will gradually approach SDMA as SIC deteriorates, a rigorous optimality characterization of this convergence phenomenon is still relatively lacking.
	
	To this end, under a modeling framework that jointly considers transceiver hardware impairments and imperfect SIC, we provide a rigorous proof from an optimization optimality perspective: when the residual interference coefficient approaches $1$ (i.e., SIC is severely ineffective), there exists an optimal solution such that the common stream beamformer satisfies $\bm w_c^\star=\bm 0$, and thus the optimal RSMA transmission structure is mathematically equivalent to SDMA. This result provides theoretical support for understanding the performance boundary and structural degeneration mechanism of RSMA, and offers guidance for multiple-access selection and robust design under SIC-limited conditions.
	
	\section{System Model}
	
	We consider a downlink RIS-assisted RSMA system, where multiple practical non-ideal factors affecting system performance are included in the model, including transceiver hardware impairments (HWI) and imperfect successive interference cancellation (SIC). In this system, a base station (BS) equipped with \(M\) transmit antennas serves \(K\) single-antenna users with the aid of an RIS composed of \(N\) reflecting elements. The direct BS--user links are assumed to be blocked, and communication mainly relies on the ``BS--RIS--user'' reflective link. Let \(\mathbf{G}\in\mathbb{C}^{N\times M}\) denote the channel matrix from the BS to the RIS, and \(\bm f_k\in\mathbb{C}^{N\times 1}\) denote the channel vector from the RIS to user \(k\). The RIS phase-shift matrix is denoted by \(\bm \Theta=\mathrm{diag}\!\left(\bm \phi^{\mathrm{H}}\right)\in\mathbb{C}^{N\times N}\), where \(\bm \phi\) is the RIS reflection-coefficient vector.
	
	\subsection{Transmit Signal Model}
	
	The BS adopts the RSMA transmission strategy, and the transmitted signal is expressed as
	\begin{equation}\label{x}
		\bm x = \bm w_c s_c + \sum_{k=1}^{K}\bm w_k s_k + \bm \eta_t,
	\end{equation}
	where $s_c$ denotes the common data stream, $s_k$ denotes the private data stream for user $k$, and $\bm w_c$ and $\bm w_k$ are the corresponding beamforming vectors.
	
	The distortion noise caused by generalized transmitter hardware impairments (HWI) is modeled as
	\begin{equation}
		\bm \eta_t \sim \mathcal{CN}\!\left(\bm 0,\; m_t\, \widetilde{\mathrm{diag}}\!\left(\bm w_c \bm w_c^{\mathrm{H}}+\sum_{k=1}^{K}\bm w_k \bm w_k^{\mathrm{H}}\right)\right),
	\end{equation}
	which has been validated by both theoretical analysis and field measurements~\cite{schenk2008rf,bjornson2014massive,zhang2023robust}. Here, $\widetilde{\mathrm{diag}}(\cdot)$ denotes the operator that extracts the diagonal elements of the input matrix and forms a diagonal matrix; $m_t \in (0,1)$ denotes the ratio of transmit distortion noise power to transmit signal power.
	
	\subsection{Receive Signal Model}
	
	The received signal at user $k$ is
	\begin{align}\label{y_K}
		y_k
		&= \tilde{y}_k + \eta_{r,k}
		= \bm h_k^\mathrm{H} \bm x + \eta_{r,k} + n_k \notag\\
		&=  \underbrace{\bm h_k^\mathrm{H} \bm w_c s_c}_{\text{common part}}
		+ \underbrace{\bm h_k^\mathrm{H} \bm w_k s_k}_{\text{private part}}
		+ \underbrace{\sum_{i=1, i \neq k}^{K} \bm h_k^\mathrm{H} \bm w_i s_i}_{\text{interference by other users}}
		+ \underbrace{\bm h_k^\mathrm{H}\bm \eta_t + \eta_{r,k} + n_k}_{\text{noise}},
	\end{align}
	\noindent where $\bm h_k^\mathrm{H} = \bm f_k^\mathrm{H} \bm\Theta \mathbf{G}$, and $ n_k \sim \mathcal{C} \mathcal{N}\left(0,\sigma_k^2\right)$ denotes the additive white Gaussian noise (AWGN) at user-$k$.
	Here, $\tilde{y}_k$ denotes the undistorted received signal before the receiver hardware distortion term $\eta_{r,k}$ is added.
	The receiver distortion noise is modeled as $\eta_{r,k} \sim \mathcal{C} \mathcal{N}\left(0,m_r\mathbb{E}[|\tilde{y}_k|^2]\right)$ caused by receiver HWI~\cite{schenk2008rf,bjornson2014massive,zhang2023robust}, where $m_r \in (0,1)$ is the ratio of distorted noise power to undistorted received signal power.
	
	\section{SINR Expressions Under Imperfect SIC}
	
	At the user side, each user initially decodes the common stream $s_c$ by treating all private streams as interference. Subsequently, user-$k$ decodes its private stream after removing $s_c$ via SIC.
	However, unlike most existing designs that assume perfect cancellation of correctly decoded signals, practical SIC is imperfect and leaves non-negligible residual self-interference, which degrades the private stream decoding performance.
	Therefore, the signal-to-interference-plus-noise ratio (SINR) of user-$k$ under transceiver HWI and imperfect SIC is given as
	\vspace{-3pt}
	\begin{equation*}
		\gamma_{c,k} = \frac{|\bm h_k^\mathrm{H} \bm w_c|^2}{\sum_{i=1}^{K}{|\bm h_k^\mathrm{H} \bm w_i|^2}+\Phi_{c,k}},
		\qquad
		\gamma_{p,k} = \frac{|\bm h_k^\mathrm{H} \bm w_k|^2}{\sum_{i=1,i\neq k}^{K}{|\bm h_k^\mathrm{H} \bm w_i|^2}+\Phi_{p,k}},
	\end{equation*}
	\vspace{-3pt}
	
	\noindent where $
	\Phi_{c,k} = \bm h_k^\mathrm{H}\!\left[m_r \mathbf{A}+ m_t(1+m_r)\widetilde{\text{diag}}(\mathbf{A})\right]\!\bm h_k +\sigma^2$,
	$\Phi_{p,k} = \delta_{\text{SIC}}^2|\bm h_k^\mathrm{H}\bm w_c|^2+\Phi_{c,k}$,
	$\mathbf{A} = \bm w_c \bm w_c^\mathrm{H}+\sum_{k=1}^{K}{\bm w_k \bm w_k^\mathrm{H}}$,
	$\sigma^2 = (1 + m_r)\sigma_k^2$,
	and $0 \leq \delta_{\text{SIC}}\leq 1$ denotes the coefficient of SIC imperfection~\cite{10517628}.
	The detailed derivation of $\Phi_{c,k}$ and $\Phi_{p,k}$ is provided in Appendix A.
	In practical deployments, the BS typically cannot obtain the exact value of $\delta_{\text{SIC}}$. To guarantee that the SINR satisfies the requirements for the allocated data rate, this parameter must be configured appropriately. The typical range for $\delta_{\text{SIC}}$ is set between 4\% and 10\%~\cite{9406993}.
	
	The achievable rates of user-$k$ are given as
	$R_{k} = R_{c,k}+R_{p,k}=\log_2\left(1+\gamma_{c,k}\right) + \log_2\left(1+\gamma_{p,k}\right)$.
	To ensure successful decoding of the common stream $s_c$ by all users, its achievable rate is constrained as
	$R_{c} = \min{\{R_{c,1},\dots, R_{c, K}\}}$.
	Then, the overall system sum rate is
	$R_{\text{total}} = R_c +\sum_{k=1}^{K} R_{p,k}$.
	
	\section{Limit Analysis and Core Proposition}
	
	\begin{theorem}[Theorem 1]
		Consider a $K$-user downlink RSMA system. Under transceiver hardware impairments and imperfect SIC,
		the SINR of the private stream for user $k$ can be expressed as
		\begin{equation}
			\gamma_{p,k}
			=
			\frac{\big|\bm h_k^\mathrm{H}\bm w_k\big|^{2}}
			{\sum\limits_{\substack{i=1, i\neq k}}^{K}\big|\bm h_k^\mathrm{H}\bm w_i\big|^{2}
				+\delta_{\mathrm{SIC}}^{2}\big|\bm h_k^\mathrm{H}\bm w_c\big|^{2}
				+\Phi_{c,k}},
		\end{equation}
		where
		\begin{align}
			\Phi_{c,k}
			&= \bm h_k^\mathrm{H}\!\left[ m_r\bm A + m_t(1+m_r)\,\widetilde{\mathrm{diag}}(\bm A)\right]\!\bm h_k + (1+m_r)\sigma_k^{2}, \\
			\bm A
			&= \bm w_c\bm w_c^\mathrm{H} + \sum_{k=1}^{K}\bm w_k\bm w_k^\mathrm{H}.
		\end{align}
		When $\delta_{\mathrm{SIC}}\to 1$, for any system utility function that is monotonically nondecreasing with respect to the users' SINRs (which can characterize common performance metrics such as weighted sum rate, proportional fairness, and max--min fairness), the optimal solution of the RSMA system must satisfy
		\begin{equation}
			\bm w_c^{\ast}=\bm 0,
		\end{equation}
		i.e., the optimal strategy allocates no power to the common stream, and RSMA adaptively degenerates into SDMA.
	\end{theorem}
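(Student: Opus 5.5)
The plan is a \emph{dominance} argument. For $\delta_{\mathrm{SIC}}=1$ (and then by continuity for $\delta_{\mathrm{SIC}}\to 1$), I would take an arbitrary feasible RSMA design $(\bm w_c,\{\bm w_k\})$ with $\bm w_c\neq\bm 0$. From it I would build a design with $\bm w_c'=\bm 0$ that uses no more transmit power and whose utility is no smaller. Since the utility is nondecreasing in the users' SINRs and rates, the optimum can then be taken with $\bm w_c^\ast=\bm 0$. I would treat the common rate $R_c=\min_k R_{c,k}$ as rate that has to be reassigned to some user's private stream. Otherwise ``monotone in the SINRs'' could not pay for discarding the common stream.

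\textbf{Step 1 (what SIC achieves at $\delta_{\mathrm{SIC}}=1$).} Fix user $k$ and use the shorthand
\[
S_k=|\bm h_k^{\mathrm H}\bm w_k|^2,\qquad C_k=|\bm h_k^{\mathrm H}\bm w_c|^2,\qquad I_k=\sum_{i\neq k}|\bm h_k^{\mathrm H}\bm w_i|^2 .
\]
Let $T_k=S_k+I_k+C_k+\Phi_{c,k}$. From the SINR expressions in Theorem~1,
\[
1+\gamma_{c,k}=\frac{T_k}{S_k+I_k+\Phi_{c,k}},\qquad 1+\gamma_{p,k}=\frac{T_k}{I_k+C_k+\Phi_{c,k}} .
\]
At $\delta_{\mathrm{SIC}}=1$ the common-stream power $C_k$ therefore reappears in full in the private-stream denominator, so cancellation gives no relief. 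I would then compare $R_{c,k}+R_{p,k}$ with the single-stream rate $\log_2\!\bigl(1+(S_k+C_k)/(I_k+\Phi_{c,k})\bigr)$. This is what user $k$ would get if the common signal energy were carried by its own private beam and the remaining terms were unchanged. The comparison is elementary because all the quantities are scalar ratios. The aim is to show that, at $\delta_{\mathrm{SIC}}=1$, the two-layer decoding never beats single-layer decoding of the merged signal.

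\textbf{Step 2 (construction).} Since $R_c=\min_k R_{c,k}\le R_{c,j}$ for every $j$, the whole common rate can be credited to one user $j$. Which $j$ is chosen depends on how the utility allocates $R_c$; for a general allocation I would split $\bm w_c$ proportionally. I would set $\bm w_c'=\bm 0$ and $\bm w_j'=\bm w_j+e^{\mathrm j\theta}\bm w_c$, with the phase $\theta$ chosen so that $|\bm h_j^{\mathrm H}\bm w_j'|^2\ge S_j+C_j$. Such a phase exists because averaging over $\theta$ gives exactly $S_j+C_j$. Next I would check the interference seen at the other users $k\neq j$. The term $C_k$ leaves the private denominator of user $k$ and is replaced by $|\bm h_k^{\mathrm H}\bm w_j'|^2$ inside $I_k$. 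I would also check that $\bm A$, and hence $\Phi_{c,k}$ and the total power, change only through cross terms. These cross terms are removed by the same phase averaging, or more cleanly by time-sharing over $\theta$ with Gaussian signalling.

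\textbf{Step 3 (conclusion).} Combining Steps 1 and 2, every user's rate under the SDMA design is at least its private rate plus its share of $R_c$ under the RSMA design. Monotonicity of the utility then gives a utility at least as large, so an optimizer with $\bm w_c^\ast=\bm 0$ exists. Finally I would pass from $\delta_{\mathrm{SIC}}=1$ to $\delta_{\mathrm{SIC}}\to1$ by continuity of the optimal value in $\delta_{\mathrm{SIC}}$ over the compact power-constrained feasible set.

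\textbf{Main obstacle.} I expect the hardest part to be controlling the cross terms $2\,\mathrm{Re}\{(\bm h_k^{\mathrm H}\bm w_j)^{\ast}\bm h_k^{\mathrm H}\bm w_c\}$ simultaneously at all users and in the hardware-distortion term $\Phi_{c,k}$. A single phase cannot make them favourable everywhere at once. My first attempt would be the random-phase/time-sharing argument mentioned in Step 2, which matches second moments exactly, and only afterwards a deterministic choice. If that fails, the fallback is to prove the weaker but clean statement that at $\delta_{\mathrm{SIC}}=1$ the common stream adds exactly as much interference to each private stream as it contributes. In that case setting $\bm w_c=\bm 0$ cannot hurt whenever $R_c$ is not needed to satisfy rate constraints.
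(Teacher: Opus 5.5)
\textbf{There is a genuine gap, and it sits exactly where you depart from the paper.}

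\textbf{The shared-common-rate reading is false.} You require that $R_c$ be paid for and, for a general allocation, shared among users. Under that reading the statement fails for max--min fairness, which the theorem explicitly lists, so no choice in Step~2 can make Step~3 true. Take $M=1$, $K=2$, $h_1=h_2=h$, $g=|h|^2$, $\kappa=m_r+m_t(1+m_r)$, and the common-only design $\bm w_1=\bm w_2=\bm 0$, $|w_c|^2=P$. With $x=gP/(\kappa gP+\sigma^2)$, each user gets $\tfrac12\log_2(1+x)$ after splitting $R_c$; $\delta_{\mathrm{SIC}}$ plays no role because there are no private streams. Any SDMA design has $\Phi_{c,k}=\kappa g(p_1+p_2)+\sigma^2$, so its best max--min point is $p_1=p_2=P/2$. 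That gives $\log_2\bigl(1+\tfrac{x}{x+2}\bigr)$ per user, which is strictly smaller because $(x+2)^2>4(1+x)$ for $x>0$. Hence every max--min optimizer has $\bm w_c\neq\bm 0$ even at $\delta_{\mathrm{SIC}}=1$, and your proportional split of $\bm w_c$ cannot work.

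\textbf{How the paper avoids this.} It defines the objective as $U(\gamma_{p,1},\ldots,\gamma_{p,K})$, so the common stream earns nothing. Deleting $\bm w_c$ with no reallocation then keeps the power constraint and weakly raises every $\gamma_{p,k}$. The $\delta_{\mathrm{SIC}}^2|\bm h_k^{\mathrm H}\bm w_c|^2$ term disappears, and $\Phi_{c,k}$ can only drop because $\bm A$ decreases in the PSD order and $\widetilde{\mathrm{diag}}$ is monotone. That is essentially your fallback, and it is the whole argument; it does not even need $\delta_{\mathrm{SIC}}\to 1$.

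\textbf{The single-recipient case is not closed either.} Suppose all of $R_c$ is credited to one user $j$, as in sum rate. Step~1 is correct: with $N_j=I_j+\Phi_{c,j}$, one has $(S_j+N_j)(C_j+N_j)=T_jN_j+S_jC_j\ge T_jN_j$. Step~2, however, does not go through. Random-phase time-sharing delivers $\mathbb{E}_\theta[\log_2(1+\mathrm{SINR}_j(\theta))]$. By Jensen this lies below $\log_2\bigl(1+(S_j+C_j)/N_j\bigr)$, so Step~1 no longer gives you any comparison with the RSMA rate. Moreover, the per-slot power and $\Phi_{c,k}$ both fluctuate with $\theta$, so such a scheme is not a feasible point of the fixed-beamformer problem whose optimizer the theorem concerns.

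\textbf{A route that would close that case.} Use a rank-one reduction. The SINR tuple of the rank-two design $\bm w_j\bm w_j^{\mathrm H}+\bm w_c\bm w_c^{\mathrm H}$ (other users unchanged) is feasible for the SINR-constrained power-minimization SDP. Its constraints stay linear in the covariances, since $\bm h_k^{\mathrm H}\widetilde{\mathrm{diag}}(\bm A)\bm h_k=\mathrm{tr}\bigl(\widetilde{\mathrm{diag}}(\bm h_k\bm h_k^{\mathrm H})\bm A\bigr)$. The standard rank-reduction result for separable SDPs with $K$ constraints then yields rank-one beamformers using no more power.

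\textbf{The continuity step also falls short.} Continuity of the optimal value in $\delta_{\mathrm{SIC}}$ only shows that the RSMA value converges to the SDMA value. It does not show $\bm w_c^\star=\bm 0$ for $\delta_{\mathrm{SIC}}<1$.
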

	
	\begin{proof}
		Under the total transmit power constraint, the system optimization problem can be formulated as
		\begin{align}
			\max_{\bm w_c,\{\bm w_k\}}\quad
			& U\!\left(\gamma_{p,1}^{\mathrm{RSMA}},\ldots,\gamma_{p,K}^{\mathrm{RSMA}}\right) \\
			\text{s.t.}\quad
			& \|\bm w_c\|^{2}+\sum_{k=1}^{K}\|\bm w_k\|^{2}\le P_{\max},
		\end{align}
		where $U(\cdot)$ denotes a system utility function that is monotonically nondecreasing with respect to the users' SINRs.
		
		We adopt a proof by contradiction. Assume that when $\delta_{\mathrm{SIC}}\to 1$, the optimal solution
		$\{\bm w_c^{\star},\bm w_1^{\star},\ldots,\bm w_K^{\star}\}$ satisfies $\bm w_c^{\star}\neq \bm 0$.
		We then construct a new set of beamforming vectors as follows:
		\begin{equation}
			\widetilde{\bm w}_c=\bm 0,\qquad \widetilde{\bm w}_k=\bm w_k^{\star},\ \forall k\in\{1,\ldots,K\}.
		\end{equation}
		It can be readily verified that this construction still satisfies the power constraint, since
		\begin{equation}
			\|\widetilde{\bm w}_c\|^{2}+\sum_{k=1}^{K}\|\widetilde{\bm w}_k\|^{2}
			=\sum_{k=1}^{K}\|\bm w_k^{\star}\|^{2}
			\le \|\bm w_c^{\star}\|^{2}+\sum_{k=1}^{K}\|\bm w_k^{\star}\|^{2}
			\le P_{\max}.
		\end{equation}
		Substituting the new beamforming vectors into the expression of the private stream SINR yields
		\begin{equation}
			\widetilde{\gamma}_{p,k}
			=
			\frac{\big|\bm h_k^\mathrm{H}\widetilde{\bm w}_k\big|^{2}}
			{\sum\limits_{\substack{i=1, i\neq k}}^{K}\big|\bm h_k^\mathrm{H}\widetilde{\bm w}_i\big|^{2}+\widetilde{\Phi}_{c,k}}
			=
			\frac{\big|\bm h_k^\mathrm{H}\bm w_k^{\star}\big|^{2}}
			{\sum\limits_{\substack{i=1, i\neq k}}^{K}\big|\bm h_k^\mathrm{H}\bm w_i^{\star}\big|^{2}+\widetilde{\Phi}_{c,k}},
		\end{equation}
		where $\widetilde{\Phi}_{c,k}$ is obtained by substituting
		\begin{equation}
			\widetilde{\bm A}=\sum_{k=1}^{K}\bm w_k^{\star}\bm w_k^{\star,H}
		\end{equation}
		into $\Phi_{c,k}$. Since $\bm w_c^{\star}\bm w_c^{\star,H}\succeq \bm 0$, we have
		\begin{equation}
			\bm A^{\star}
			=
			\bm w_c^{\star}\bm w_c^{\star,H}+\sum_{k=1}^{K}\bm w_k^{\star}\bm w_k^{\star,H}
			\succeq
			\sum_{k=1}^{K}\bm w_k^{\star}\bm w_k^{\star,H}
			=
			\widetilde{\bm A}.
		\end{equation}
		Moreover, $\widetilde{\mathrm{diag}}(\cdot)$ preserves the elementwise order on diagonal elements, which implies
		\begin{equation}
			\widetilde{\Phi}_{c,k}\le \Phi_{c,k}^{\star}, \quad \forall k.
		\end{equation}
		By comparing the SINRs under the two schemes, we obtain, for any user $k$,
		\begin{align}
			\widetilde{\gamma}_{p,k}-\gamma_{p,k}^{\star}
			&=
			\frac{\big|\bm h_k^\mathrm{H}\bm w_k^{\star}\big|^{2}}
			{\sum\limits_{\substack{i=1, i\neq k}}^{K}\big|\bm h_k^\mathrm{H}\bm w_i^{\star}\big|^{2}+\widetilde{\Phi}_{c,k}}
			-
			\frac{\big|\bm h_k^\mathrm{H}\bm w_k^{\star}\big|^{2}}
			{\sum\limits_{\substack{i=1, i\neq k}}^{K}\big|\bm h_k^\mathrm{H}\bm w_i^{\star}\big|^{2}+\big|\bm h_k^\mathrm{H}\bm w_c^{\star}\big|^{2}+\Phi_{c,k}^{\star}}.
		\end{align}
		Since $\bm w_c^{\star}\neq \bm 0$, except for the degenerate case where $\bm w_c^{\star}$ lies in the intersection of the nullspaces of $\{\bm h_k\}$, there exists at least one user $k$ such that $\big|\bm h_k^\mathrm{H}\bm w_c^{\star}\big|^{2}>0$, which leads to $\widetilde{\gamma}_{p,k}>\gamma_{p,k}^{\star}$.
		Because the utility function $U(\cdot)$ is monotonically nondecreasing with respect to the users' SINRs, it follows that
		\begin{equation}
			U\!\left(\widetilde{\gamma}_{p,1},\ldots,\widetilde{\gamma}_{p,K}\right)
			>
			U\!\left(\gamma_{p,1}^{\star},\ldots,\gamma_{p,K}^{\star}\right),
		\end{equation}
		which contradicts the optimality of $\{\bm w_c^{\star},\bm w_1^{\star},\ldots,\bm w_K^{\star}\}$.
		Therefore, in the limiting case $\delta_{\mathrm{SIC}}\to 1$, any solution with $\bm w_c\neq \bm 0$ cannot be optimal,
		and the optimal common stream beamformer must satisfy
		\begin{equation}
			\bm w_c^{\ast}=\bm 0.
		\end{equation}
		Substituting $\bm w_c^{\ast}=\bm 0$ into the private stream SINR expression yields
		\begin{equation}
			\gamma_{p,k}
			=
			\frac{\big|\bm h_k^\mathrm{H}\bm w_k\big|^{2}}
			{\sum\limits_{\substack{i=1, i\neq k}}^{K}\big|\bm h_k^\mathrm{H}\bm w_i\big|^{2}+\Phi_{c,k}},
		\end{equation}
		which is fully consistent with the conventional space division multiple access (SDMA) structure.
		
		This completes the proof. 
		\end{proof}
		In summary, under imperfect successive interference cancellation, as $\delta_{\mathrm{SIC}}$ increases gradually from $0$
		and approaches $1$, the optimal transmission structure of RSMA will adaptively degenerate into SDMA, i.e.,
		\begin{equation}
			\lim_{\delta_{\mathrm{SIC}}\to 1}\mathrm{RSMA}=\mathrm{SDMA}.
		\end{equation}

	\section{Conclusion}
	From an optimization optimality perspective, this document rigorously proved that under imperfect SIC, when SIC became fully ineffective, there existed an optimal solution of the RSMA system for which the common stream beamformer degenerated to the zero vector, and thus the optimal transmission structure was mathematically equivalent to SDMA. This result provided theoretical support for understanding the performance boundary and structural degeneration mechanism of RSMA, and served as a reference for multiple-access selection and system design in SIC-limited scenarios.

	\bibliographystyle{IEEEtran}
	\bibliography{myref}
\end{document}